\newtheorem{Thm}{Theorem}
\newtheorem{Lem}[Thm]{Lemma}
\newtheorem{Prop}[Thm]{Proposition}
\newtheorem{Cor}[Thm]{Corollary}
\theoremstyle{definition}
\newtheorem{Def}[Thm]{Definition}
\newtheorem{Rem}[Thm]{Remark}
\newcommand{\ket}[1]{\left|{#1}\right\rangle}
\newcommand{\bra}[1]{\left\langle{#1}\right|}
\newcommand{\Tr}{\mathop{\mathrm{Tr}}\nolimits}
\def\Complex{\mathbb{C}}
\def\Real{\mathbb{R}}
\newcommand{\C}{\mbox{$\mathcal{C}$}}
\newcommand{\M}{\mbox{$\mathcal{M}$}}
\newcommand*{\cI}{\mathcal{I}}
\newcommand*{\cL}{\mathcal{L}}
\newcommand*{\cN}{\mathcal{N}}
\begin{document}

\title{
Quantum non-signalling assisted zero-error classical capacity of qubit channels
}

\author{Jeonghoon Park}\email{zucht@khu.ac.kr}
\affiliation{
Department of Mathematics and Research Institute for Basic Sciences,
Kyung Hee University, Seoul 130-701, Korea
}
\author{Soojoon Lee}\email{level@khu.ac.kr}
\affiliation{
Department of Mathematics and Research Institute for Basic Sciences,
Kyung Hee University, Seoul 130-701, Korea
}

\date{\today}

\begin{abstract}
In this paper, 
we explicitly evaluate the one-shot quantum non-signalling assisted zero-error classical capacities 
$\M_0^{\mathrm{QNS}}$ for qubit channels.
In particular, 
we show that for nonunital qubit channels, $\M_0^{\mathrm{QNS}}=1$, 
which implies that in the one-shot setting, 
nonunital qubit channels cannot transmit any information with zero probability of error 
even when assisted by quantum non-signalling correlations.
Furthermore, 
we show that for qubit channels, $\M_0^{\mathrm{QNS}}$ equals to 
the one-shot entanglement-assisted zero-error classical capacities.
This means that for a single use of a qubit channel, 
quantum non-signalling correlations are not more powerful than shared entanglement.
\end{abstract}

\pacs{03.67.Hk 
}

\maketitle

\section{Introduction}
While the ordinary channel capacity allows errors
which can be made arbitrarily small in sufficiently many channel uses,
the zero-error channel capacity does not allow any error.
Thus, in many situations that no error is tolerated or 
only a small number of channel uses are available,
it is important to consider the zero-error channel capacity~\cite{KO98}.

Additional resources as correlations between sender and receiver in a channel
can be used for communication,
and they may increase its channel capacity.
Indeed, it has been known that shared entanglement can increase zero-error capacities 
of both classical channels~\cite{CLMW10,LMMOR12} and quantum channels~\cite{Duan09}.
However, 
we cannot send information by only using the shared entanglement 
without any additional classical/quantum channel.
For example,
the dense coding~\cite{BW92} and the quantum teleportation~\cite{BBCJPW93} use 
shared entanglement as a resource,
but both of them need a classical/quantum channel 
through which information can be sent.
As more general resources, 
classical and quantum non-signalling (QNS) correlations have been introduced 
in Ref.~\cite{CLMW11} and Ref.~\cite{DW14}, respectively.
In particular, 
QNS correlations can be viewed as bipartite quantum channels
from $A_1 \otimes B_1$ to $A_2 \otimes B_2$
through which the parties $A$ and $B$ cannot send any information to each other.
Thus, it is said to be quantum non-signalling correlations,
and shared entanglement and classical non-signalling correlations can be regarded
as special cases of QNS correlations. 

We here take into account
the QNS assisted zero-error classical capacity of qubit channels.
For qubit channels, 
the one-shot zero-error classical capacity cannot be superactivated~\cite{PL12}
(in fact, it also holds for qutrit channels~\cite{SS15}),
although the superactivation of quantum channels is 
a peculiar quantum effect with no classical analogue~\cite{Duan09,CCH11}.
In addition,
regularization is not necessary 
for the (asymptotic) zero-error classical capacity of qubit channels
and even for the entanglement-assisted zero-error classical capacity~\cite{DSW13}.
From these properties,
we may think that
qubit channels have 
somewhat different properties
from higher dimensional cases.

In this paper, 
we evaluate explicitly the one-shot QNS assisted zero-error classical capacities 
$\M_0^{\mathrm{QNS}}$ for qubit channels.
For unital qubit channels, 
it is not so hard to calculate $\M_0^{\mathrm{QNS}}$
unlike nonunital case. 
Here, we show that $\M_0^{\mathrm{QNS}}=1$ for any nonunital qubit channel. 
In other words, 
nonunital qubit channels cannot transmit any information 
with zero probability of error 
when assisted by entanglement,
or even when assisted by QNS correlations
in the one-shot setting.
Moreover, 
we show that for qubit channels, $\M_0^{\mathrm{QNS}}$ equals to 
the one-shot entanglement-assisted zero-error classical capacity $\M_0^{\mathrm{SE}}$.
This means that 
QNS correlations are not more powerful than shared entanglement 
for a single use of a qubit channel.

This paper is organized as follows.
In Section~\ref{sec:one-shot},
we evaluate the one-shot QNS assisted zero-error classical capacities for qubit channels.
In Section~\ref{sec:main}, 
we show that for qubit channels, 
$\M_0^{\mathrm{QNS}}$ equals to $\M_0^{\mathrm{SE}}$.
Finally, we summarize our results and conclude with discussion in Section~\ref{sec:conclusion}.

\section{The one-shot QNS assisted zero-error classical capacities of qubit channels}
\label{sec:one-shot}
In this section, 
we explicitly calculate the one-shot QNS assisted zero-error classical capacities of 
Pauli channels and nonunital qubit channels using the formula in Ref.~\cite{DW14}.
Since any unital qubit channel is unitarily equivalent to a Pauli channel~\cite{King02},
it is sufficient to consider Pauli channels rather than all unital qubit channels.

It is known~\cite{DW14} that
one can calculate 
the one-shot QNS assisted zero-error classical capacities $\M_0^{\mathrm{QNS}}$ 
by some semi-definite programming (SDP) as follows.

\begin{Lem}\label{Lemma:M_0^QNS}
For a given quantum channel $\cN$ from a quantum system $A$ to a quantum system $B$, 
let $\Upsilon(\cN)$ be the quantity 
obtained from the following SDP:
\begin{equation}\label{eq:Upsilon}
\Upsilon(\cN) = \max \Tr S_A,  
\end{equation}
where the maximum is taken over all $S_A$'s satisfying
\begin{align}
&0 \le U_{AB} \le S_A \otimes I_B, \label{eq:cons1} \\
&\Tr_A U_{AB} = I_B, \label{eq:cons2} \\
&\Tr P_{AB}(S_A \otimes I_B - U_{AB}) = 0. \label{eq:cons3}
\end{align}
Here, 
$P_{AB}$ is the projection onto the support of the Choi-Jamio{\l}kowski (CJ) matrix 
$(\cI \otimes \cN)\ket\Phi_{AB}\bra\Phi$ of $\cN$,
where $\ket\Phi_{AB} = \sum_{k}\ket{k}_A\ket{k}_B$ be the unnormalized maximally entangled state.
Then $\M_0^{\mathrm{QNS}}(\cN)$ is 
the integer part $\lfloor \Upsilon(\cN) \rfloor$ of the quantity $\Upsilon(\cN)$.
\end{Lem}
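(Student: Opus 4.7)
The plan is to cast $\M_0^{\mathrm{QNS}}(\cN)$ as an optimization over bipartite QNS correlations assisting $\cN$, represent the resulting constraints via Choi--Jamio{\l}kowski (CJ) operators, and then reduce to the two-register SDP in the statement.

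First I would set up a rate-$m$ QNS-assisted zero-error code as a bipartite channel $\Pi$ shared by Alice and Bob and composed with $\cN$, whose overall effect on classical labels is the identity on $m$ messages. Writing $\Pi$ via its CJ operator $\Omega$ makes positivity, trace-preservation, and both no-signalling conditions ($A\not\to B$ and $B\not\to A$) into semidefinite and linear constraints on $\Omega$. The zero-error decoding requirement translates, via the CJ correspondence and the identification of $P_{AB}$ with the support of the CJ matrix of $\cN$, into the condition that the composed marginal on Alice's channel-input side is supported inside $P_{AB}$.

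Next, after tracing out Bob's registers and relabeling, the remaining variables are $U_{AB}$ (the composed marginal) and $S_A$ (its further partial trace on $A$): constraints \eqref{eq:cons1} and \eqref{eq:cons2} come from positivity together with Alice-side no-signalling/trace-preservation, while \eqref{eq:cons3} encodes that $U_{AB}$ lies in the range of $P_{AB}$, which is precisely the zero-error condition. The objective $\Tr S_A$ then equals the number $m$ of messages, so $\Upsilon(\cN)$ is the natural continuous relaxation of the integer-valued quantity $\M_0^{\mathrm{QNS}}(\cN)$.

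Finally, to conclude $\M_0^{\mathrm{QNS}}(\cN) = \lfloor \Upsilon(\cN) \rfloor$, the ``$\le$'' direction is immediate from relaxation: any $m$-message code yields a feasible point with $\Tr S_A = m$, and since the capacity is an integer one can round down. The ``$\ge$'' direction needs a rounding argument: given any feasible pair $(S_A, U_{AB})$ with $\Tr S_A \ge m \in \mathbb{Z}$, one must reconstruct an honest QNS correlation realizing a genuine $m$-message zero-error code. I expect this rounding step to be the main obstacle, since the reduced SDP variables package only two-register data and one must reassemble a full four-register $\Omega$ that simultaneously satisfies positivity, trace-preservation, and both no-signalling constraints, while still decoding $m$ integer messages perfectly.
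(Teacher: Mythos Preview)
The paper does not supply its own proof of this lemma: it is stated as a known result and attributed to Ref.~\cite{DW14} (Duan--Winter), with the preceding sentence ``It is known~\cite{DW14} that one can calculate the one-shot QNS assisted zero-error classical capacities $\M_0^{\mathrm{QNS}}$ by some semi-definite programming (SDP) as follows.'' So there is no in-paper argument to compare your proposal against; the authors simply import the SDP characterization as a black box and use it in the subsequent computations.

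Your sketch is a reasonable outline of how the derivation in Duan--Winter proceeds: encode a QNS-assisted code via the CJ operator of the bipartite non-signalling box, translate the zero-error requirement into a support condition relative to $P_{AB}$, and then marginalize to the two-register variables $(S_A,U_{AB})$. You are also right that the nontrivial direction is the rounding/reconstruction step, where from a feasible $(S_A,U_{AB})$ with $\Tr S_A\ge m$ one must exhibit an honest four-register QNS correlation achieving $m$ messages. That step is genuinely where the work lies in \cite{DW14}, and your proposal does not yet indicate \emph{how} to perform that reconstruction (e.g., how to build the encoding and decoding pieces and verify both no-signalling directions simultaneously). If you intend to actually write out a proof rather than cite the result, that is the gap you would need to fill; otherwise, matching the paper here simply means citing \cite{DW14}.
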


We note that since $\Upsilon(\cN)$ essentially depends on the projection $P_{AB}$,
$\Upsilon(\cN)$ can be also denoted by $\Upsilon(P_{AB})$.

\subsection{Pauli channels}
\label{subsec:Pauli}
We first consider Pauli channels defined as follows.

\begin{Def}\label{Def:Pauli}
For a probability distribution $\{p_{ij}\}$,
the {\em Pauli channel} $\cN^{P}$ is defined by
\begin{equation}\label{eq:Pauli}
\cN^{P}(\rho) \equiv \sum_{i,j=0}^{1} p_{ij}X^iZ^j \rho (X^iZ^j)^\dagger,
\end{equation}
where
$X \equiv \sum_{j=0}^{1} \ket{j \oplus 1}\bra{j}$ 
and
$Z \equiv \sum_{j=0}^{1} (-1)^{j}\ket{j}\bra{j}$.
\end{Def}

\begin{Lem}\label{Lem:max.e.}
Assume that both $A$ and $B$ are two-dimensional quantum systems, 
and let $P_{AB}$ be the projection defined as
\begin{equation}\label{eq:max.e.proj}
P_{AB} = \sum_{i=1}^{k} \ket{\phi_i}_{AB}\bra{\phi_i},
\end{equation}
where $\ket{\phi_i}$'s are mutually orthogonal and maximally entangled states.
Then $\Upsilon(P_{AB}) = 4/k$.
\end{Lem}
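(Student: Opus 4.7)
The plan is to bracket $\Upsilon(P_{AB})$ between $4/k$ and $4/k$ by exhibiting an explicit feasible point for the lower bound and by exploiting complementary slackness together with the maximally entangled structure for the upper bound.

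For the lower bound I would try the natural ansatz $S_A = (2/k)\, I_A$ and $U_{AB} = (2/k)\, P_{AB}$ and check the three constraints one by one. Positivity is immediate, and $U_{AB} \le S_A \otimes I_B$ follows from $P_{AB} \le I_{AB}$. Since each $\ket{\phi_i}$ is normalized and maximally entangled on a two-qubit system, $\Tr_A \ket{\phi_i}\bra{\phi_i} = I_B/2$, so $\Tr_A P_{AB} = (k/2)\, I_B$ and hence $\Tr_A U_{AB} = I_B$, verifying (\ref{eq:cons2}). For (\ref{eq:cons3}), $P_{AB}^2 = P_{AB}$ gives $\Tr P_{AB}(S_A \otimes I_B - U_{AB}) = (2/k)\Tr P_{AB} - (2/k)\Tr P_{AB} = 0$. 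The resulting objective value is $\Tr S_A = 4/k$.

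For the upper bound I would use that if $X, Y \ge 0$ and $\Tr XY = 0$ then $XY = 0$; applied to $X = P_{AB}$ and $Y = S_A \otimes I_B - U_{AB}$ (both positive by the constraints), constraint (\ref{eq:cons3}) forces
\begin{equation}
P_{AB}(S_A \otimes I_B) = P_{AB}\, U_{AB}.
\end{equation}
Taking the trace of both sides and computing the left side via the partial trace identity $\Tr P_{AB}(S_A \otimes I_B) = \Tr_A\bigl[S_A (\Tr_B P_{AB})\bigr]$ together with the maximally entangled fact $\Tr_B P_{AB} = (k/2)\, I_A$, I obtain $(k/2)\Tr S_A = \Tr P_{AB} U_{AB}$. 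Then bounding the right side via $P_{AB} \le I_{AB}$ and $U_{AB} \ge 0$ gives $\Tr P_{AB} U_{AB} \le \Tr U_{AB} = \Tr I_B = 2$, so $\Tr S_A \le 4/k$, which closes the argument.

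The only subtlety is justifying that the trace condition (\ref{eq:cons3}) upgrades to the matrix identity $P_{AB}(S_A \otimes I_B - U_{AB}) = 0$, but this is the standard fact that $\Tr XY = 0$ for positive semidefinite $X, Y$ forces $X^{1/2} Y^{1/2} = 0$ and hence $XY = 0$. Everything else is a direct computation exploiting the two key properties of the $\ket{\phi_i}$: they are mutually orthogonal (so $P_{AB}^2 = P_{AB}$) and maximally entangled (so their partial traces are proportional to the identity).
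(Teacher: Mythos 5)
Your proof is correct and follows essentially the same route as the paper: the same feasible point $S_A = (2/k)I_A$, $U_{AB} = (2/k)P_{AB}$ for the lower bound, and the same chain $(k/2)\Tr S_A = \Tr P_{AB}U_{AB} \le \Tr U_{AB} = 2$ for the upper bound. The only difference is that your detour through the operator identity $P_{AB}(S_A\otimes I_B - U_{AB})=0$ is unnecessary, since the trace equality you actually use is just constraint (\ref{eq:cons3}) itself read by linearity of the trace.
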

\begin{proof}
For any $S_A$ and $U_{AB}$ satisfying
the constraints (\ref{eq:cons1}), (\ref{eq:cons2}), and (\ref{eq:cons3}),
\[
k\Tr S_A = 2\Tr P_{AB}U_{AB} \leq 2\Tr U_{AB} = 2\Tr I_B = 4.
\]
Thus we can obtain the inequality $\Tr S_A \leq 4/k$. 
We now take $(2/k)I_A$ as $S_A$ and $(2/k)P_{AB}$ as $U_{AB}$.
Then it can be readily seen that
the constraints (\ref{eq:cons1}), (\ref{eq:cons2}), and (\ref{eq:cons3}) hold, 
and $\Tr ((2/k)I_A) = 4/k$.
Hence, $\Upsilon(P_{AB}) = 4/k$.
\end{proof}

Using the above lemma,
we can obtain explicit values of $\Upsilon$ for Pauli channels.

\begin{Thm}\label{Prop:Pauli ch.}
Let $\cN^{P}$ be the Pauli channel
with a probability distribution $\{p_{ij}\}$.
Then $\Upsilon(\cN^{P}) = 4/k$,
where k is the number of nonzero probabilities $p_{ij}$.
\end{Thm}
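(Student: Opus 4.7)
The plan is to reduce the theorem directly to Lemma~\ref{Lem:max.e.} by identifying the projection $P_{AB}$ onto the support of the Choi--Jamio\l{}kowski matrix of $\cN^{P}$ and showing that it has exactly the form appearing in that lemma.

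First I would compute the CJ matrix explicitly. With $\ket{\Phi}_{AB} = \ket{00}+\ket{11}$, applying $\cI\otimes\cN^{P}$ gives
\begin{equation*}
(\cI\otimes\cN^{P})\ket\Phi\bra\Phi
= \sum_{i,j=0}^{1} p_{ij}\,\ket{\phi_{ij}}\bra{\phi_{ij}},
\end{equation*}
where $\ket{\phi_{ij}} \equiv (I_{A}\otimes X^{i}Z^{j})\ket{\Phi}_{AB}$. The key observation is that the four vectors $\ket{\phi_{ij}}$ are (unnormalized) Bell states: they are mutually orthogonal and each is maximally entangled. This is a standard consequence of the orthogonality of the Pauli operators under the Hilbert--Schmidt inner product, so the four $\ket{\phi_{ij}}$ form an orthogonal basis of $\Complex^{2}\otimes\Complex^{2}$ consisting entirely of maximally entangled states.

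Next I would read off the support. Since the decomposition above is a spectral decomposition (orthogonal rank-one terms with nonnegative coefficients), the support of the CJ matrix is precisely the span of those $\ket{\phi_{ij}}$ with $p_{ij}>0$. If $k$ of the probabilities are nonzero, then
\begin{equation*}
P_{AB} = \sum_{(i,j):\,p_{ij}>0} \frac{1}{2}\,\ket{\phi_{ij}}\bra{\phi_{ij}},
\end{equation*}
which is a sum of $k$ rank-one projectors onto mutually orthogonal maximally entangled states, exactly the hypothesis of Lemma~\ref{Lem:max.e.}. Applying that lemma immediately yields $\Upsilon(\cN^{P}) = \Upsilon(P_{AB}) = 4/k$.

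I do not anticipate a serious obstacle: the content is essentially bookkeeping about the Bell basis and the structure of Pauli channels, and Lemma~\ref{Lem:max.e.} already does all the SDP work. The only point worth writing carefully is the normalization convention, since $\ket{\Phi}_{AB}$ is unnormalized and so each $\ket{\phi_{ij}}$ has norm $\sqrt{2}$; this accounts for the factor $1/2$ in the expression for $P_{AB}$ but does not change its rank or the orthogonality of the supporting vectors, so the conclusion $\Upsilon(\cN^{P})=4/k$ stands.
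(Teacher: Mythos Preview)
Your proposal is correct and follows essentially the same route as the paper: compute the CJ matrix, observe that its eigenvectors are the mutually orthogonal (unnormalized) Bell states $(I\otimes X^{i}Z^{j})\ket{\Phi}$, so the support projection is a sum of $k$ rank-one projectors onto orthogonal maximally entangled states, and then apply Lemma~\ref{Lem:max.e.}. Your explicit treatment of the normalization factor $1/2$ is a welcome clarification but does not change the argument.
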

\begin{proof}
Since the CJ matrix of $\cN^{P}$ is
\[
\sum_{i,j=0}^{1}p_{ij}
\left(I \otimes X^iZ^j\right) \ket{\Phi}\bra{\Phi} \left(I \otimes X^iZ^j\right)^\dagger,
\]
and $\left(I \otimes X^iZ^j\right)\ket{\Phi}$'s are
mutually orthogonal and (unnormalized) maximally entangled states,
the projection $P_{AB}$ associated with the channel $\cN^{P}$ is
the sum of $k$ mutually orthogonal and maximally entangled states as one-dimensional projections
which is of the form in Eq.~(\ref{eq:max.e.proj}).
Thus by Lemma~\ref{Lem:max.e.}, 
it is clear that 
\[
\Upsilon(\cN^{P})=\Upsilon(P_{AB})=4/k.
\]
\end{proof}

The (asymptotic) QNS assisted zero-error classical capacity $\C_0^{\mathrm{QNS}}(\cN)$ 
of a quantum channel $\cN$ is defined by
\begin{equation}\label{eq:C_0^QNS}
\C_0^{\mathrm{QNS}}(\cN) = \lim_{n \rightarrow \infty}\frac{1}{n}\log{\Upsilon(\cN^{\otimes n})}.
\end{equation}
Since it is not hard to show that 
$\log\Upsilon$ is additive for Pauli channels,
we directly have the following corollary.

\begin{Cor}\label{Cor:QNS_Pauli}
For Pauli channels $\cN^{P}$, 
$\C_0^{\mathrm{QNS}}$ is additive, and
$\C_0^{\mathrm{QNS}}(\cN^{P}) = \log(4/k)$, 
where k is the number of nonzero probabilities $p_{ij}$.
\end{Cor}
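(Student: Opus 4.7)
The plan is to reduce the asymptotic claim to the single-shot calculation of Theorem~\ref{Prop:Pauli ch.} by showing that the structural argument behind it propagates through tensor powers. The key technical input is a dimension-free version of Lemma~\ref{Lem:max.e.}.

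First, I would generalize Lemma~\ref{Lem:max.e.} from qubits to arbitrary dimension: if $P_{AB}$ is the sum of $m$ one-dimensional projections onto mutually orthogonal maximally entangled states on $\Complex^d \otimes \Complex^d$, then $\Upsilon(P_{AB}) = d^2/m$. The proof is a verbatim transcription of Lemma~\ref{Lem:max.e.}: from $\Tr_B P_{AB} = (m/d) I_A$ together with the constraints~(\ref{eq:cons1})--(\ref{eq:cons3}) one derives $(m/d)\Tr S_A = \Tr P_{AB} U_{AB} \le \Tr U_{AB} = d$, and the choice $S_A = (d/m) I_A$, $U_{AB} = (d/m) P_{AB}$ saturates the bound.

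Next, I would verify that the projection associated with $(\cN^{P})^{\otimes n}$ has this form with $d = 2^n$ and $m = k^n$. Since the CJ matrix factorizes as a tensor product of the $n$ individual CJ matrices, its support projection is spanned by the vectors $(I \otimes X^{i_1} Z^{j_1} \otimes \cdots \otimes X^{i_n} Z^{j_n}) \ket{\Phi}_{AB}$ ranging over $n$-tuples with every $p_{i_\ell j_\ell} > 0$. Tensor products of distinct Pauli strings are orthogonal under the Hilbert--Schmidt inner product and tensor products of maximally entangled states are maximally entangled, so these $k^n$ vectors are pairwise orthogonal and maximally entangled on $\Complex^{2^n} \otimes \Complex^{2^n}$. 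Applying the generalized lemma gives $\Upsilon((\cN^{P})^{\otimes n}) = 4^n / k^n$, and plugging into Eq.~(\ref{eq:C_0^QNS}) yields $\C_0^{\mathrm{QNS}}(\cN^{P}) = \log(4/k)$.

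Finally, the additivity of $\log \Upsilon$ for Pauli channels is the same computation applied to $\cN_1^{P} \otimes \cN_2^{P}$ with $k_1$ and $k_2$ nonzero weights: the combined support consists of $k_1 k_2$ mutually orthogonal maximally entangled states on $\Complex^4 \otimes \Complex^4$, so $\Upsilon(\cN_1^{P} \otimes \cN_2^{P}) = 16/(k_1 k_2) = \Upsilon(\cN_1^{P})\Upsilon(\cN_2^{P})$, whence $\log \Upsilon$ is additive on the Pauli class. There is no genuine obstacle here; the only point worth checking carefully is the mutual orthogonality and maximal entanglement of the tensor-product vectors, both of which reduce to the corresponding properties of the Pauli operators in the single-qubit case.
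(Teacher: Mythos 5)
Your proposal is correct and follows exactly the route the paper intends: the paper's one-line justification (``$\log\Upsilon$ is additive for Pauli channels'') together with Remark~\ref{Rem:g.Pauli} is precisely your dimension-free version of Lemma~\ref{Lem:max.e.} applied to the tensor-power support projection, which consists of $k^n$ mutually orthogonal maximally entangled states on $\Complex^{2^n}\otimes\Complex^{2^n}$. Your write-up just makes explicit the details the authors left to the reader.
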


\begin{Rem}\label{Rem:g.Pauli}
The arguments in this subsection can be applied to 
higher dimensional Pauli channels as well.
Then we can also calculate the QNS assisted zero-error classical capacities 
for generalized Pauli channels.
\end{Rem}

\subsection{Nonunital qubit channels}
\label{subsec:nonunital}
We now show that the value of $\Upsilon$ is one for nonunital qubit channels.
This means that in the one-shot setting, 
nonunital qubit channels cannot send any message with zero probability of error 
even when assisted by QNS correlations.

\begin{Thm}\label{Prop:Upsilon_nonunital}
For any nonunital qubit channel $\cN$, $\Upsilon(\cN) = 1$.
\end{Thm}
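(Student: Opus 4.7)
The plan is to show both inequalities $\Upsilon(\cN)\ge 1$ and $\Upsilon(\cN)\le 1$. The lower bound is free: the pair $(S_A,U_{AB})=(I_A/2,\,I_{AB}/2)$ satisfies all three SDP constraints and gives $\Tr S_A=1$. For the upper bound I recast the constraints into a statement about the kernel of the Choi matrix. Given any feasible $(S_A,U_{AB})$, set $X_{AB}:=S_A\otimes I_B-U_{AB}$. Constraint (\ref{eq:cons1}) says $X_{AB}\ge 0$; constraint (\ref{eq:cons3}), combined with $P_{AB},X_{AB}\ge 0$, forces $P_{AB}X_{AB}=0$, so $X_{AB}$ is supported on $\ker J_\cN$; constraint (\ref{eq:cons2}) gives $\Tr_A X_{AB}=(\Tr S_A-1)\,I_B$. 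Hence $\Tr S_A>1$ would produce a nonzero $X_{AB}\ge 0$ supported on $\ker J_\cN$ whose partial trace on $A$ is a \emph{positive} multiple of $I_B$, and my goal becomes showing that no such $X_{AB}$ exists for a nonunital qubit channel.

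To establish this nonexistence I first reduce to a canonical form. The quantity $\Upsilon$ is invariant under pre- and post-composition of $\cN$ with unitary conjugations, because these transform $P_{AB}$ by local unitaries and give a trace-preserving bijection between feasible pairs. I may therefore assume $\cN$ is in the standard canonical form for qubit channels, in which the Bloch-sphere action has the shape $\vec r\mapsto(\lambda_1 r_1,\lambda_2 r_2,\lambda_3 r_3)+(0,0,t)$ with $t\ne 0$. In this form the Choi matrix decomposes as an orthogonal direct sum of two $2\times 2$ blocks supported on $\mathrm{span}\{|00\rangle,|11\rangle\}$ and on $\mathrm{span}\{|01\rangle,|10\rangle\}$ respectively, with entries expressible explicitly in terms of $\lambda_1,\lambda_2,\lambda_3,t$. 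Nonunitality ($t\ne 0$) prevents either block from vanishing, while each block may independently have rank $1$ or $2$. Correspondingly $\ker J_\cN$ splits as a direct sum of at-most-one-dimensional pieces, one inside each of these two-dimensional subspaces.

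I then expand $X_{AB}\ge 0$ in a basis of $\ker J_\cN$ adapted to the block decomposition and compute $\Tr_A X_{AB}$ directly. The diagonal (block-preserving) entries of $X_{AB}$ contribute to the diagonal of $\Tr_A X_{AB}$ as nonnegative combinations of $|0\rangle\langle 0|_B$ and $|1\rangle\langle 1|_B$ with weights equal to the squared Schmidt coefficients of the kernel vectors. Because the kernel vectors are \emph{non-maximally} entangled when $t\ne 0$, and their Schmidt weights are biased in the same direction, the condition $\Tr_A X_{AB}\propto I_B$ forces these diagonal entries to vanish. Positivity of $X_{AB}$ then forces the off-diagonal entries coupling kernel vectors from the two blocks to vanish as well, whence $X_{AB}=0$.

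The main obstacle is the case analysis on the ranks of the two blocks. The rank-$2$+rank-$2$ case is trivial ($\ker J_\cN=\{0\}$); the rank-$1$+rank-$2$ case is easy because the single kernel vector lies entirely in one block and has unequal Schmidt weights determined by $t$, ruling out $\Tr_A X_{AB}\propto I_B$ for any nonzero rank-$1$ $X_{AB}$. The most delicate case is rank-$1$+rank-$1$, where $\ker J_\cN$ is two-dimensional and $X_{AB}$ has genuine off-diagonal freedom coupling the two kernel vectors; the crux is to check that this freedom cannot compensate for the diagonal imbalance, which reduces to a short linear-algebraic computation in which $t\ne 0$ enters decisively.
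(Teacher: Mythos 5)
Your overall strategy is sound and in fact coincides with the paper's: the reformulation of feasibility as the existence of a PSD operator $X_{AB}=S_A\otimes I_B-U_{AB}$ supported on $\ker J_{AB}$ with $\Tr_A X_{AB}=(\Tr S_A-1)I_B$ is exactly the device used in the paper's proof, and your lower-bound witness $(I_A/2,\,I_{AB}/2)$ is correct. The kernel/Schmidt-weight computation you sketch is also correct \emph{for the class of channels you reduce to}. The problem is the reduction itself.

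The gap is the claim that every nonunital qubit channel can be brought by pre- and post-composition with unitaries to the form $\vec r\mapsto(\lambda_1r_1,\lambda_2r_2,\lambda_3r_3)+(0,0,t)$. Local unitaries act on the affine representation $\vec r\mapsto M\vec r+\vec t$ by $M\mapsto R_VMR_U$, $\vec t\mapsto R_V\vec t$; the singular value decomposition fixes $R_V$ (up to signed permutations) once the singular values of $M$ are distinct, so the residual freedom cannot rotate a generic $\vec t$ onto a coordinate axis. Concretely, take $\cN=\tfrac12\cN_1+\tfrac12\cN_2$ with $\cN_1,\cN_2$ amplitude-damping channels of distinct rates damping toward $\hat z$ and $\hat x$ respectively: the canonical form has $M$ diagonal with three distinct entries and $\vec t$ with two nonzero components. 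For such channels $\cN(\ket{0}\bra{0})$ is not diagonal, so the Choi matrix does \emph{not} split into blocks on $\mathrm{span}\{\ket{00},\ket{11}\}$ and $\mathrm{span}\{\ket{01},\ket{10}\}$, the kernel vectors do not lie in those subspaces, and the ``same-direction bias'' of Schmidt weights on which your contradiction rests is no longer available. (The form you assume is the Ruskai--Szarek--Werner normal form for \emph{extreme and quasi-extreme} qubit channels, i.e.\ those of Choi rank $\le 2$ with an additional alignment property; it is not the general canonical form.) To close the gap you would need to handle a general translation vector, which is essentially what the paper does: it abandons the channel normal form and instead parametrizes the spectral decomposition of $J_{AB}$ directly (Schmidt form for one eigenvector, general coefficients for the rest), then extracts the unitality conclusion from the trace-preservation and orthogonality relations by a case analysis on $\mathrm{rank}(J_{AB})$.
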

\begin{proof}
Assume that $\Upsilon(\cN) > 1$,
then we will show that $\cN$ must be unital.
Let $J_{AB}$ be the CJ matrix of a qubit channel $\cN$.
We note that 
$\Tr_B{J_{AB}} = I_A$, 
since $\cN$ is a quantum channel, and 
that $\Tr_A{J_{AB}} = I_B$ if and only if $\cN$ is unital.
Moreover,
$\mathrm{rank}(J_{AB}) < 4$, 
since otherwise $P_{AB} = I_{AB}$, 
and so $\Upsilon(P_{AB}) = 1$.

We first suppose that $\mathrm{rank}(J_{AB}) = 1$.
Then $J_{AB} = \alpha\ket{\phi}\bra{\phi}$ 
for some $\alpha > 0$ and some state $\ket{\phi}$.
Since $\Tr_B{J_{AB}} = I_A$, 
$\ket{\phi}$ is maximally entangled.
Thus, $\Tr_A{J_{AB}} = I_B$, 
and so $\cN$ is unital.

We now suppose that $\mathrm{rank}(J_{AB}) = 3$.
Then $P_{AB} = I_{AB}-\ket{\psi}\bra{\psi}$ 
for some state $\ket{\psi}$.
Since $\Upsilon(P_{AB}) > 1$, 
there exist $S_A$ with $\Tr{S_A}>1$ and $U_{AB}$ satisfying the constraints
(\ref{eq:cons1}), (\ref{eq:cons2}), and (\ref{eq:cons3}).
Then $S_A \otimes I_B - U_{AB} = \alpha\ket{\psi}\bra{\psi}$ for some $\alpha > 0$.
By tracing out the system A,
we can see that $\ket{\psi}$ must be maximally entangled.
Up to local unitary,
we may assume that $P_{AB} = I_{AB}-\ket{\phi^+}\bra{\phi^+}$
and $J_{AB} = \sum_{i=1}^3\alpha_i\ket{\phi_i}\bra{\phi_i}$, 
where $\sum_{i}\alpha_i=2$ with $\alpha_i > 0$, 
$\ket{\phi^+} = (\ket{00}+\ket{11})/\sqrt{2}$, and 
$\ket{\phi_i}$ are orthogonal to $\ket{\phi^+}$.
Since $\Tr_A{\ket{\psi}\bra{\psi}} + (\Tr_B{\ket{\psi}\bra{\psi}})^T = I$ 
for any state $\ket{\psi}$ orthogonal to $\ket{\phi^+}$, 
$\Tr_A{J_{AB}} = 2I - (\Tr_B{J_{AB}})^T = I$. 
Thus, $\cN$ is unital.

Let us assume that $\mathrm{rank}(J_{AB}) = 2$ 
in the rest of this proof.
Without loss of generality, let
\[
J_{AB} = r\ket{\psi_1}\bra{\psi_1} + (2-r)\ket{\psi_2}\bra{\psi_2},
\]
where
$0 < r \le 1$,
and
$\ket{\psi_1} = \sqrt{\lambda}\ket{00}+\sqrt{1-\lambda}\ket{11}$
and $\ket{\psi_2} = \sum_{i,j=0}^{1}a_{ij}\ket{ij}$
are orthogonal states
for $0 \le \lambda \le 1/2$.
Since $\ket{\psi_1}$ and $\ket{\psi_2}$ are orthogonal,
\begin{equation}\label{eq:a_{11}}
a_{11}= -\sqrt{\frac{\lambda}{1-\lambda}}a_{00}.
\end{equation}
Since $\Tr_B(J_{AB}) = I_A$,
\begin{equation}\label{eq:zero_term}
a_{00}a_{10}^*+a_{01}a_{11}^* = 0
\end{equation}
and
\begin{equation}\label{eq:diag_equality}
r(1-2\lambda) = (2-r)(2|a_{00}|^2+2|a_{01}|^2-1).
\end{equation}
Since $0 < r \le 1$ and $\lambda \le 1/2$,
it follows from Eq.~(\ref{eq:diag_equality}) that
\begin{equation}\label{eq:sum_coeff}
|a_{00}|^2+|a_{01}|^2 \ge 1/2.
\end{equation}

We divide the remaining proof into three cases
according to the values of $a_{01}$ and $a_{00}$;
(Case~1)~$a_{01}=0$,
(Case~2)~$a_{00}=0$, and
(Case~3)~$a_{00} \ne 0$ and $a_{01} \ne 0$.

(Case 1)
From Eqs.~(\ref{eq:zero_term}) and~(\ref{eq:sum_coeff}),
$a_{10}=0$.
Then $\ket{\psi_2} = a_{00}\ket{00}+a_{11}\ket{11}$, and 
$\Tr_A{J_{AB}} = \Tr_B{J_{AB}} = I$.

(Case 2)
By Eq.~(\ref{eq:a_{11}}),
$a_{11}=0$.
Then we can choose an orthonormal basis $\{\ket{\psi_3}, \ket{\psi_4}\}$
for the subspace orthogonal to $P_{AB}$,
where
$\ket{\psi_3} = \sqrt{1-\lambda}\ket{00}-\sqrt{\lambda}\ket{11}$ and
$\ket{\psi_4} = a_{10}^*\ket{01}-a_{01}^*\ket{10}$.
Let $S_A$ and $U_{AB}$ satisfy the constraints
(\ref{eq:cons1}), (\ref{eq:cons2}), and (\ref{eq:cons3}).
By the constraints (\ref{eq:cons1}) and (\ref{eq:cons3}),
\begin{eqnarray}\label{eq:feasible1}
S_A \otimes I_B - U_{AB}
&=& a\ket{\psi_3}\bra{\psi_3} + b\ket{\psi_4}\bra{\psi_4} \nonumber \\
&& + c\ket{\psi_3}\bra{\psi_4} + c^*\ket{\psi_4}\bra{\psi_3}
\end{eqnarray}
for some $a,b \ge 0$ and $c \in \Complex$.
Tracing out the system $A$ on both sides of Eq.~(\ref{eq:feasible1}),
by the constraint (\ref{eq:cons2}),
we can obtain
$a(1-2\lambda) = b(1-2|a_{01}|^2)$ and $|a_{01}|^2 \le 1/2$.
By Eq.~(\ref{eq:sum_coeff}),
$|a_{01}|^2 = 1/2$,
and so $\ket{\psi_2}$ is maximally entangled.
Moreover,
from Eq.~(\ref{eq:diag_equality}),
we can see that $\ket{\psi_1}$ is also maximally entangled.
Thus, $\Tr_A{J_{AB}} = I_B$.

(Case 3)
Since $\Upsilon(\cN) > 1$,
there are $S_A$ with $\Tr{S_A}>1$ and $U_{AB}$ satisfying
the constraints (\ref{eq:cons1}), (\ref{eq:cons2}), and (\ref{eq:cons3}).
By the constraints (\ref{eq:cons1}) and (\ref{eq:cons3}),
\begin{equation}\label{eq:feasible2}
S_A \otimes I_B - U_{AB}=
\alpha\ket{\psi_3}\bra{\psi_3} + \beta\ket{\psi_4}\bra{\psi_4},
\end{equation}
where $\{\ket{\psi_3}, \ket{\psi_4}\}$ is an orthonormal basis
for the subspace orthogonal to $P_{AB}$
and $\alpha, \beta \ge 0$.
Let $\ket{\psi_3} = \sum_{i,j=0}^{1}b_{ij}\ket{ij}$
and $\ket{\psi_4} = \sum_{i,j=0}^{1}c_{ij}\ket{ij}$.
Since $\ket{\psi_3}$ is orthogonal to $\ket{\psi_1}$ and $\ket{\psi_2}$,
we can note that
\begin{eqnarray}
b_{11} &=& -\sqrt{\frac{\lambda}{1-\lambda}}b_{00}, 
\label{eq:ortho_1} \\
a_{01}^*b_{01} &=& -\frac{a_{00}^*b_{00}}{{1-\lambda}} - a_{10}^*b_{10}.
\label{eq:ortho_2}
\end{eqnarray}
Similar equations hold for the coefficients of $\ket{\psi_4}$.

Tracing out the system $A$ on both sides of Eq.~(\ref{eq:feasible2}),
by the constraint (\ref{eq:cons2}),
the following three equalities can be obtained.
\begin{eqnarray}
\Tr{S_A}-1 &=& \alpha(|b_{00}|^2+|b_{10}|^2) \nonumber \\
&& + \beta(|c_{00}|^2+|c_{10}|^2),~\label{eq:(1,1)} \\
\Tr{S_A}-1 &=&
\alpha\left(\frac{\lambda |b_{00}|^2}{1-\lambda}+|b_{01}|^2\right) \nonumber \\
&& + \beta\left(\frac{\lambda |c_{00}|^2}{1-\lambda}+|c_{01}|^2\right),~\label{eq:(2,2)} \\
0 &=&
\alpha\left(b_{00}b_{01}^* - \sqrt{\frac{\lambda}{1-\lambda}}b_{00}^*b_{10}\right) \nonumber \\
&& + \beta\left(c_{00}c_{01}^* - \sqrt{\frac{\lambda}{1-\lambda}}c_{00}^*c_{10}\right)~\label{eq:(1,2)}.
\end{eqnarray}
Multiply both sides of 
Eq.~(\ref{eq:(1,1)}), Eq.~(\ref{eq:(2,2)}), and Eq.~(\ref{eq:(1,2)})
by $-\lambda |a_{01}|^2$, $(1-\lambda) |a_{01}|^2$, and $a_{00}^*a_{01}$, respectively.
Then we can obtain
from Eqs.~(\ref{eq:a_{11}}), (\ref{eq:zero_term}), (\ref{eq:ortho_1}), and (\ref{eq:ortho_2}) that
their sum becomes 
\[
(1-2\lambda)|a_{01}|^2(\Tr{S_A}-1) =0.
\]
Thus, 
$\lambda =1/2$,
and so $\ket{\psi_1}$ is maximally entangled.
Moreover, by Eq.~(\ref{eq:zero_term}) and~(\ref{eq:diag_equality}),
we can see that $\ket{\psi_2}$ is also maximally entangled.
Hence, $\Tr_A{J_{AB}} = I_B$, and
This completes the proof.
\end{proof}

\begin{Rem} 
It is clear by Lemma~\ref{Lemma:M_0^QNS} that
for nonunital qubit channels,
$\M_0^{\mathrm{QNS}}=1$,
and so 
the one-shot (entanglement-assisted) zero-error classical capacity is also one
for nonunital qubit channels.
\end{Rem}

\section{Relations to the entanglement-assisted zero-error classical capacities}
\label{sec:main}
In this section, 
we show that $\M_0^{\mathrm{QNS}}$ is equals to 
the one-shot entanglement-assisted zero-error classical capacity $\M_0^{\mathrm{SE}}$ 
for qubit channels.
This says that 
shared entanglement is sufficient to achieve $\M_0^{\mathrm{QNS}}(\cN)$ 
for qubit channels $\cN$.

It has been known that for a quantum channel $\cN$, 
$\M_0^{\mathrm{SE}}(\cN)$ depends only on its associated subspace 
called the {\em noncommutative graph} of $\cN$~\cite{DSW13}.
Precisely, the associated subspace $S$ is defined by 
$S\equiv \mathrm{span}\{ E^{\dagger}_i E_j \}$, 
where $E_i$ are Kraus operators of the channel $\cN$.
Thus, to compare $\M_0^{\mathrm{QNS}}$ with $\M_0^{\mathrm{SE}}$, 
we need to reformulate the results in the previous section 
in terms of the noncommutative graphs.

For a quantum channel $\cN$ with Kraus operators $E_i$, 
define the {\em Kraus operator space} $K$ of $\cN$ as
$K \equiv \mathrm{span}\{E_i\}$. 
We note that
$S = \mathrm{span}\{G_j^{\dagger}G_k\}$
for any orthonormal basis $\{G_j\}$ of $K$, and that
there exists an orthonormal basis $\{F_j\}$ for $K$ such that
the CJ matrix $J_{AB}$ of $\cN$ can be expressed as 
$J_{AB} = \sum_{j} a_j(I \otimes F_j){\ket{\Phi}\bra{\Phi}}(I \otimes {F_j}^\dagger)$,
where $a_j > 0$~\cite{DW14}.
Then we obtain the following theorem.

\begin{Thm}\label{Thm:qubit}
For any qubit channel $\cN$, 
let $S$ be the noncommutative graph of $\cN$. 
Then $\M_0^{\mathrm{QNS}}(\cN) = 4/{\dim(S)}$.
\end{Thm}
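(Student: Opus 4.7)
The plan is to leverage the explicit $\M_0^{\mathrm{QNS}}$ values obtained in Section~\ref{sec:one-shot} and match them against $4/\dim(S)$ on a case-by-case basis. Both $\M_0^{\mathrm{QNS}}$ and $\dim(S)$ are invariant under pre- and post-composition with unitary channels, so by King's theorem it suffices to treat Pauli channels and nonunital qubit channels separately.

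For a Pauli channel $\cN^P$ with $k$ nonzero Pauli Kraus operators $P_1,\ldots,P_k$, Hermiticity $P_i^\dagger = P_i$ together with $P_i^2 = I$ implies $S = \mathrm{span}\{P_i P_j\}$, where each product $P_i P_j$ is again a Pauli up to phase. A short enumeration yields $\dim(S) = 1, 2, 4, 4$ for $k = 1, 2, 3, 4$ respectively, so combining with Theorem~\ref{Prop:Pauli ch.} (which gives $\Upsilon = 4/k$ and hence $\M_0^{\mathrm{QNS}} = \lfloor 4/k\rfloor$) one verifies $\M_0^{\mathrm{QNS}} = 4/\dim(S)$ in each case.

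For a nonunital qubit channel, Theorem~\ref{Prop:Upsilon_nonunital} already gives $\M_0^{\mathrm{QNS}} = 1$, so it remains only to prove $\dim(S) = 4$. I prove the contrapositive that $\dim(S) \le 3$ implies $\cN$ is unital. The case $\dim(S) = 1$ forces each $E_i^\dagger E_j$ to be a scalar multiple of $I$; then every Kraus operator is a scalar multiple of one common unitary and $\cN$ is a unitary channel. For $\dim(S) = 2$, up to unitary equivalence I may take $S = \mathrm{span}\{I, Z\}$, so every $E_i^\dagger E_j$ is diagonal in the computational basis, which forces each Kraus operator into the form $E_i = \alpha_i\ket{v}\bra{0} + \beta_i\ket{w}\bra{1}$ for a fixed orthonormal pair $\ket{v}, \ket{w}$; trace preservation then gives $\cN(I) = \ket{v}\bra{v} + \ket{w}\bra{w} = I$.

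The main obstacle is ruling out $\dim(S) = 3$. Up to unitary equivalence I may take $S = \mathrm{span}\{I, X, Y\}$, so $\Tr(Z\,E_i^\dagger E_j) = 0$ for all $i, j$. By cyclicity of the trace this reads $\langle E_i, E_j Z\rangle_{\mathrm{HS}} = 0$, i.e., the Kraus operator space $K$ is Hilbert--Schmidt orthogonal to $KZ$. Since right multiplication by $Z$ is a unitary involution on the four-dimensional space of $2\times 2$ matrices, this orthogonality forces $\dim K \le 2$; a polar-decomposition argument then shows $K = \{U\ket{x}\bra{0} + V\ket{x}\bra{1} : \ket{x}\in\Complex^2\}$ for some unitaries $U, V$. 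Writing $W := U^\dagger V$ and computing $F_i^\dagger F_j$ for an orthonormal basis $\{F_1, F_2\}$ of $K$ in terms of matrix elements of $W$, the unitarity of $W$ forces the four products $\{F_i^\dagger F_j\}$ to span a subspace of dimension at most $2$, contradicting $\dim(S) = 3$. Hence every nonunital qubit channel has $\dim(S) = 4$, completing the proof.
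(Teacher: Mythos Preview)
Your proof is correct, and for the nonunital part you take a genuinely different and cleaner route than the paper.

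For the unital (Pauli) case you and the paper do the same thing. For the nonunital case the paper fixes explicit coordinates for the Kraus operators $F_1,F_2,F_3$, assembles the matrix whose columns are the vectors $\ket{F_i^\dagger F_j}$, and then grinds through a long case analysis on $\mathrm{rank}(J_{AB})\in\{2,3\}$ and the parameter $a$, forcing various $2\times 2$ subdeterminants to vanish until a contradiction with nonunitality appears. You instead prove the contrapositive by classifying operator systems in $\cL(\Complex^2)$ up to unitary conjugacy: every $2$-dimensional one is $\mathrm{span}\{I,Z\}$ and every $3$-dimensional one is $\mathrm{span}\{I,X,Y\}$, and you then argue structurally. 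The $\dim S=2$ case is handled by the observation that the column spaces $V_0=\mathrm{span}\{E_i\ket{0}\}$ and $V_1=\mathrm{span}\{E_i\ket{1}\}$ are orthogonal, hence one-dimensional each, which forces unitality. The $\dim S=3$ case uses the pretty fact that $K\perp KZ$ in Hilbert--Schmidt, so $\dim K\le 2$; the columns-have-equal-norm and columns-preserve-inner-products consequences then force $K=\{\,\ket{x}\bra{0}+V\ket{x}\bra{1}\,\}$ for a unitary $V$, and a short computation using the unitarity relations $|V_{12}|=|V_{21}|$ and $V_{11}\overline{V_{21}}+V_{12}\overline{V_{22}}=0$ shows $\dim(K^\dagger K)\le 2$, ruling out $\dim S=3$ entirely. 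Your argument is more conceptual and avoids the paper's coordinate bookkeeping; the paper's approach is more elementary but considerably longer.

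Two small remarks. First, what you call a ``polar-decomposition argument'' is really just the extension of a partial isometry (the inner-product-preserving map from first columns to second columns) to a unitary on $\Complex^2$; you might rename it. Second, the final claim that unitarity of $W=U^\dagger V$ forces $\dim\mathrm{span}\{F_i^\dagger F_j\}\le 2$ deserves one displayed line: writing $F_i^\dagger F_j=\delta_{ij}I+V_{ij}\ket{0}\bra{1}+\overline{V_{ji}}\ket{1}\bra{0}$ in the basis $\ket{x_i}$, the antidiagonal parts of $F_1^\dagger F_2$, $F_2^\dagger F_1$, and $F_2^\dagger F_2-F_1^\dagger F_1$ are pairwise proportional precisely because $|V_{12}|^2=|V_{21}|^2$ and $(V_{22}-V_{11})\overline{V_{21}}=\overline{(V_{22}-V_{11})}\,V_{12}$ follow from $V^\dagger V=VV^\dagger=I$.
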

\begin{proof}
First, let us consider unital qubit channels.
As stated at the beginning of the previous section, 
it is sufficient to consider Pauli channels.
Let $\cN^{P}$ be a Pauli channel
with the probability distribution $\{p_{ij}\}$,
and let $k$ be the number of nonzero probabilities $p_{ij}$.
It is easy to calculate the dimension of the noncommutative graph $S$ of $\cN^{P}$
according to $k$.
Indeed, when $k=1, 2, 3$, and $4$, $\dim(S) = 1, 2, 4$, and $4$, respectively.
By Theorem~\ref{Prop:Pauli ch.}, 
$\M_0^{\mathrm{QNS}}(\cN^{P}) = 4/{\dim(S)}$.

We now consider nonunital qubit channels.
By Theorem~\ref{Prop:Upsilon_nonunital},
it is sufficient to show that $\dim(S)=4$.
Let $J_{AB}$ be the CJ matrix of a nonunital qubit channel $\cN$. 
Then $\mathrm{rank}(J_{AB}) > 1$, 
since otherwise the associated channel must be unital
as in the proof of Theorem~\ref{Prop:Upsilon_nonunital}. 
When $\mathrm{rank}(J_{AB})=4$, 
the associated projection $P_{AB}$ is equals to $I$, 
and so it is clear that $\dim(S)=4$.

Assume that $\mathrm{rank}(J_{AB})=2$ or 3.
Let
\[J_{AB} = \sum_{i=1}^3 r_i(I \otimes F_i){\ket{\Phi}\bra{\Phi}}(I \otimes {F_i}^\dagger),
\] 
where 
$r_1, r_2 > 0$, $r_3\ge0$, 
$\sum_{i=1}^3 r_i = 2$, and
$\{F_i\}$ is an orthonormal basis for
the Kraus operator space $K$ of $\cN$. 
Without loss of generality,
we can let
$F_1 = \sqrt{a}\ket{0}\bra{0} + \sqrt{1-a}\ket{1}\bra{1}$, 
$F_2 = \sum_{s,t=0}^1 b_{st}\ket{s}\bra{t}$, and
$F_3 = \sum_{s,t=0}^1 c_{st}\ket{s}\bra{t}$, 
where
$1/2 \le a \le 1$ and
$b_{st}, c_{st} \in \Complex$.
We note that 
when $r_3=0$, 
it becomes the case of that $\mathrm{rank}(J_{AB})=2$ 
, and that
$\Tr_A{J_{AB}} \ne I_B$, 
since $\cN$ is nonunital.

From the orthonormality of $\{F_i\}$,
we obtain the following equalities
\begin{eqnarray}
0 &=& \sqrt{a}b_{00} + \sqrt{1-a}b_{11}, \label{eq:orthog_12} \\
0 &=& \sqrt{a}c_{00} + \sqrt{1-a}c_{11}, \label{eq:orthog_13} \\
0 &=& a(b_{01}^*c_{01}+b_{10}^*c_{10}) + b_{11}^*c_{11}, \label{eq:orthog_23} \\
a &=& a\left(|b_{01}|^2+|b_{10}|^2\right) + |b_{11}|^2,\label{eq:normal_2} \\
a &=& a\left(|c_{01}|^2+|c_{10}|^2\right) + |c_{11}|^2. \label{eq:normal_3}
\end{eqnarray}
Since $\cN$ is a quantum channel, 
$\Tr_B{J_{AB}} = I_A$, 
which gives the following equalities
\begin{eqnarray}
1 = &r_1a + r_2(|b_{00}|^2+|b_{10}|^2) + r_3(|c_{00}|^2+|c_{10}|^2), \label{eq:TP_00} \\
0 = &r_2(b_{00}^*b_{01}+b_{10}^*b_{11}) + r_3(c_{00}^*c_{01}+c_{10}^*c_{11}).\label{eq:TP_01}
\end{eqnarray}

First, let us consider when $a=1/2$.
By Eqs.~(\ref{eq:orthog_12}) and~(\ref{eq:orthog_13}), 
$b_{00} = -b_{11}$ and $c_{00} = -c_{11}$. 
Then $\Tr_A{J_{AB}} = \Tr_B{J_{AB}} = I$, 
hence this is a contradiction to being nonunital.

Let us now assume that $a \ne 1/2$ 
and $\mathrm{rank}(J_{AB})=3$,
that is, $r_3>0$.
We consider the matrix $M_0$ whose columns are $\ket{F_i^{\dagger}F_j}$, 
where
$1 \le i,j \le 3$ and 
$\ket{A} \equiv \sum_{s,t=0}^1 a_{st}\ket{s}\ket{t}$ 
for a matrix $A = \sum_{s,t=0}^1 a_{st}\ket{s}\bra{t}$. 
Then by applying elementary operations properly on $M_0$
we obtain the following matrix

\begin{equation}\label{eq:equiv_matrix}
M=
\left(\begin{array}{c@{}|c}
	\begin{array}{ccc}
	1 & 0 & 0\\
       0 & 1 & 0
	\end{array} & 
	\begin{array}{cccccc}
	0 & 0 & 0 & 0 & 0 & 0 \\
	0 & 0 & 0 & 0 & 0 & 0	
	\end{array} \\ \hline
	\begin{array}{ccc}
	0 & 0 & z\\
       0 & 0 & z^*
	\end{array} & 
	\begin{array}{ccc}
	M_1 & M_2 & M_3	
	\end{array}
\end{array}\right), 
\end{equation}
where 
\begin{eqnarray}
M_1 &=& 
\left(\begin{array}{cc}
\sqrt{a}b_{01} & \sqrt{1-a}b_{10}^* \\
\sqrt{1-a}b_{10} & \sqrt{a}b_{01}^*
\end{array}\right), \\ 
M_2 &=& 
\left(\begin{array}{cc}
\sqrt{a}c_{01} & \sqrt{1-a}c_{10}^* \\
\sqrt{1-a}c_{10} & \sqrt{a}c_{01}^*
\end{array}\right), \\ 
M_3 &=& 
\left(\begin{array}{cc}
b_{00}^*c_{01}+b_{10}^*c_{11} & b_{01}c_{00}^*+b_{11}c_{10}^* \\
b_{01}^*c_{00}+b_{11}^*c_{10} & b_{00}c_{01}^*+b_{10}c_{11}^*
\end{array}\right), \\
z &=& c_{00}^*c_{01}+c_{10}^*c_{11}.
\end{eqnarray}

We will show that $\mathrm{rank}(M)=4$, 
that is, $\dim(S) = 4$.
To do this, 
we suppose that $\mathrm{rank}(M)<4$. 
Then we note that any two-by-two submatrix of the lower-right block of $M$
consisting of $M_1$, $M_2$, and $M_3$
has zero determinant.

Assume that $a=1$.
By Eqs.~(\ref{eq:orthog_12}) and~(\ref{eq:orthog_13}), 
$b_{00}=0=c_{00}$.
Since $\det(M_3)=0$, 
\begin{equation}\label{eq:from_M3}
|b_{10}|^2|c_{11}|^2=|b_{11}|^2|c_{10}|^2.
\end{equation}
Since $\det(M_1)=0$ and $\det(M_2)=0$, 
from Eqs.~(\ref{eq:orthog_23}),~(\ref{eq:normal_2}), and~(\ref{eq:normal_3}), 
\begin{eqnarray}
|b_{10}|^2|c_{10}|^2 &=& |b_{11}|^2|c_{11}|^2, \label{eq:from_orthog} \\
|b_{10}|^2+|b_{11}|^2 &=& |c_{10}|^2+|c_{11}|^2 = 1. \label{eq:from_normal}
\end{eqnarray}
%
It is straightforward 
from Eqs.~(\ref{eq:from_M3}),~(\ref{eq:from_orthog}), and~(\ref{eq:from_normal}) 
to obtain the following equalities
\begin{equation}\label{eq:b10=1/2=c10}
|b_{10}|^2 = 1/2 = |c_{10}|^2.
\end{equation}
Then, from Eq.~(\ref{eq:TP_00}), 
$r_1=0$.
This is a contradiction,
and hence $\mathrm{rank}(M)=4$.

Assume that $1/2<a<1$.
%
Since $\det(M_1)=0$ and $\det(M_2)=0$, 
\begin{eqnarray}
|b_{01}|^2 &=& \frac{1-a}{a}|b_{10}|^2, \label{eq:b01=} \\
|c_{01}|^2 &=& \frac{1-a}{a}|c_{10}|^2. \label{eq:c01=}
\end{eqnarray}
Then, from~(\ref{eq:normal_2}) and~(\ref{eq:normal_3}), 
\begin{equation}\label{eq:10^2+11^2=a^2}
|b_{10}|^2 + |b_{11}|^2 = a = |c_{10}|^2 + |c_{11}|^2.
\end{equation}
Since 
$\det\left(\begin{array}{cc}
\sqrt{a}b_{01} & \sqrt{a}c_{01} \\
\sqrt{1-a}b_{10} & \sqrt{1-a}c_{10}
\end{array}\right)=0$, 
\begin{equation}\label{eq:det_12_13}
b_{01}c_{10}-b_{10}c_{01} = 0,
\end{equation}
and since 
$\det\left(\begin{array}{cc}
\sqrt{a}b_{01} & \sqrt{1-a}c_{10}^* \\
\sqrt{1-a}b_{10} & \sqrt{a}c_{01}^*
\end{array}\right)=0$, 
from Eq.~(\ref{eq:orthog_23}), 
\begin{equation}\label{eq:det_12_31}
b_{10}c_{10}^* + b_{11}c_{11}^* = 0.
\end{equation}
Since 
$\det\left(\begin{array}{cc}
\sqrt{a}b_{01} & b_{00}^*c_{01}+b_{10}^*c_{11} \\
\sqrt{1-a}b_{10} & b_{01}^*c_{00}+b_{11}^*c_{10}
\end{array}\right)=0$, 
by Eqs.~(\ref{eq:orthog_12}),~(\ref{eq:orthog_13}),~(\ref{eq:b01=}), and~(\ref{eq:det_12_13}), 
\begin{equation}\label{eq:det_12_23}
\sqrt{1-a}|b_{10}|^2c_{11} = \sqrt{a}b_{10}b_{11}^*c_{01}.
\end{equation}
%
Then, from Eqs.~(\ref{eq:c01=}) and~(\ref{eq:10^2+11^2=a^2}), 
we obtain
\begin{equation}\label{eq:zero_cond}
|b_{10}|(|b_{10}|-|c_{10}|) = 0.
\end{equation}
Thus, $b_{10}=0$ or $|b_{10}|=|c_{10}|$.

Suppose that $b_{10}=0$. 
By Eqs.~(\ref{eq:orthog_12}),~(\ref{eq:orthog_13}),~(\ref{eq:10^2+11^2=a^2}), and~(\ref{eq:det_12_31}), 
it is not hard to show that 
$|b_{00}|^2=1-a$ and $|c_{00}|^2+|c_{10}|^2=a$.
From Eq.~(\ref{eq:TP_00}), 
$r_3=0$. 
This is a contradiction,
and so $b_{10}\ne0$ and $|b_{10}|=|c_{10}|$.
%
By Eqs.~(\ref{eq:b01=}),~(\ref{eq:c01=}),~(\ref{eq:10^2+11^2=a^2}), and~(\ref{eq:det_12_31}), 
$|b_{st}|^2 = a/2 = |c_{st}|^2$
for any $s,t \in \{0,1\}$.
From Eq.~(\ref{eq:TP_00}), $a=1/2$.
This is a contradiction, 
and so $\textrm{rank}(M)=4$.

We now assume that $a \ne 1/2$ and $\mathrm{rank}(J_{AB})=2$ 
in the rest of this proof.
As in the case of that $\mathrm{rank}(J_{AB})=2$, 
let us consider the matrix whose columns are $\ket{F_i^{\dagger}F_j}$, 
where
$1 \le i,j \le 2$. 
Then by applying elementary operations properly
we obtain the following matrix

\begin{equation}\label{eq:equiv_matrix}
N=
\left(\begin{array}{cc|cc}
1 & 0 & 0 & 0 \\
0 & 1 & 0 & 0 \\ \hline
0 & 0 & \sqrt{a}b_{01} & \sqrt{1-a}b_{10}^* \\
0 & 0 & \sqrt{1-a}b_{10} & \sqrt{a}b_{01}^* 
\end{array}\right),
\end{equation}

Suppose that $\mathrm{rank}(N)<4$.
Then since $\det(N)=0$, 
\begin{equation}\label{eq:det=0}
a|b_{01}|^2 - (1-a)|b_{10}|^2 = 0.
\end{equation}
By Eq.~(\ref{eq:TP_01}), 
\begin{equation}\label{eq:from_TP}
|b_{00}|^2|b_{01}|^2=|b_{10}|^2|b_{11}|^2.
\end{equation}
Then from Eqs.~(\ref{eq:orthog_12}),~(\ref{eq:normal_2}), and~(\ref{eq:det=0}), 
\begin{equation}
|b_{10}|^2(a-|b_{10}|^2) = 0.
\end{equation}
If $b_{10}=0$, 
by Eq.~(\ref{eq:det=0}), 
$b_{01}=0$, 
and so $\Tr_A{J_{AB}} = \Tr_B{J_{AB}} = I$.
This is a contradiction, 
and hence $|b_{10}|^2=a$.
By Eqs.~(\ref{eq:orthog_12}),~(\ref{eq:normal_2}), and~(\ref{eq:det=0}), 
$|b_{00}|^2+|b_{10}|^2=a$.
Then by Eq.~(\ref{eq:TP_00}), 
$a=1/2$, 
which implies being unital.
Therefore, we can conclude that $\mathrm{rank}(N)=4$, 
and hence $\dim(S)=4$.
\end{proof}

Since $\M_0^{\mathrm{SE}}(\cN)$ depends on 
the noncommutative graph $S$ of the channel $\cN$, 
$\M_0^{\mathrm{SE}}(\cN)$ can be denoted by $\M_0^{\mathrm{SE}}(S)$.
For quantum channels with qubit inputs, 
that is, noncommutative graphs $S \subset \cL(\Complex^2)$, 
$\M_0^{\mathrm{SE}}(S)$ and 
the (asymptotic) entanglement-assisted zero-error classical capacity $\C_0^{\mathrm{SE}}(S)$ 
can be obtained from Ref.~\cite{DSW13} 
as in the following proposition.
\begin{Prop}\label{Prop:M0^SE_qubit}
For a noncommutative graph $S \subset \cL(\Complex^2)$, 
$\C_0^{\mathrm{SE}}(S) = \log\M_0^{\mathrm{SE}}(S)$.
Moreover, when $\dim(S)=1,2,3$, and $4$, 
$\M_0^{\mathrm{SE}}(S)=4,2,2$, and $1$, respectively.
\end{Prop}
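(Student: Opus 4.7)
The plan is to specialize the general framework of Ref.~\cite{DSW13} to noncommutative graphs $S\subset\cL(\Complex^2)$ and verify each of the four possible dimensions separately. Since any noncommutative graph is a self-adjoint subspace of $\cL(H_A)$ containing $I$, for qubit input we have $\dim(S)\in\{1,2,3,4\}$. DSW give both an SDP characterization of $\M_0^{\mathrm{SE}}(S)$ and a Lov\'asz-theta-type upper bound $\tilde\vartheta(S)$ that is multiplicative under tensor products and satisfies $\C_0^{\mathrm{SE}}(S)\le\log\tilde\vartheta(S)$. My strategy is to show that $\M_0^{\mathrm{SE}}(S)$ equals $\tilde\vartheta(S)$ with the claimed integer value in each case; the equality $\C_0^{\mathrm{SE}}(S)=\log\M_0^{\mathrm{SE}}(S)$ then follows automatically from multiplicativity of the upper bound combined with the trivial achievability $\C_0^{\mathrm{SE}}(S)\ge\log\M_0^{\mathrm{SE}}(S)$.

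Case $\dim(S)=1$: being self-adjoint and containing $I$ forces $S=\mathrm{span}\{I\}$, the noncommutative graph of the noiseless qubit channel, for which superdense coding yields $\M_0^{\mathrm{SE}}=4=\dim(H_A)^2$ and saturates the trivial upper bound. Case $\dim(S)=4$: $S=\cL(\Complex^2)$, so every pair of candidate codewords is coupled by some element of $S$, forcing $\M_0^{\mathrm{SE}}=1$. Cases $\dim(S)\in\{2,3\}$: for the lower bound I would exhibit a two-message entanglement-assisted protocol. When $\dim(S)=2$, up to local unitary we may take $S$ to be the diagonal matrices, in which case two orthogonal projections on a maximally entangled resource give a valid code, and the upper bound $\M_0^{\mathrm{SE}}\le2$ follows either from the DSW SDP or from $\M_0^{\mathrm{SE}}\le\M_0^{\mathrm{QNS}}$ via Theorem~\ref{Thm:qubit}. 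When $\dim(S)=3$, one exploits the one-dimensional orthogonal complement $S^{\perp}\subset\cL(\Complex^2)$ — a single traceless Hermitian direction — to build two distinguishable entanglement-assisted codewords using a suitably chosen ancilla, and then verifies $\tilde\vartheta(S)\le2$ via the DSW SDP.

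The hardest step will be the $\dim(S)=3$ case, since no qubit-to-qubit channel realizes such a graph (the proofs of Theorems~\ref{Prop:Pauli ch.} and~\ref{Prop:Upsilon_nonunital} restrict $\dim(S)$ to $\{1,2,4\}$ for qubit channels), so one cannot piggyback on Theorem~\ref{Thm:qubit}. Instead, one must classify three-dimensional self-adjoint subspaces of $\cL(\Complex^2)$ containing $I$ — up to local unitary on the input they are parameterized by a single traceless Hermitian direction — and evaluate the DSW SDP on each canonical form to confirm $\tilde\vartheta(S)=2$. The matching size-two code must also be constructed carefully using a possibly enlarged ancilla, since the natural maximally entangled construction on $\Complex^2\otimes\Complex^2$ only works when $S$ fits inside block-diagonal form with respect to some basis, which fails generically once $\dim(S)=3$.
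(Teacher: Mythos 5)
The paper offers no proof of Proposition~\ref{Prop:M0^SE_qubit} at all --- it is imported verbatim from Ref.~\cite{DSW13} --- so there is no internal argument to compare against; what you have written is a reconstruction of the DSW argument, and in outline it is sound. The classification step is correct: a noncommutative graph $S\subset\cL(\Complex^2)$ is a self-adjoint subspace containing $I$, so $S^{\perp}$ is self-adjoint and trace-orthogonal to $I$, hence spanned by traceless Hermitian operators; up to a unitary on the input this pins $S$ down to $\Complex I$, $\mathrm{span}\{I,Z\}$, $\mathrm{span}\{I,X,Y\}$, or $\cL(\Complex^2)$ in the four dimensions. The corresponding values of the quantum Lov\'asz number are $\tilde\vartheta(S)=4,2,2,1$ (for $\dim S=3$, for instance, $T\in S^{\perp}\otimes\cL(A')$ forces $T=Z\otimes N$ with $\|N\|\le 1$, whence $\|I+T\|\le 2$), and multiplicativity of $\tilde\vartheta$ together with one-shot achievability does give $\C_0^{\mathrm{SE}}(S)=\log\M_0^{\mathrm{SE}}(S)$ exactly as you argue.

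The one genuine misstep is your closing claim that for $\dim(S)=3$ the maximally entangled construction on $\Complex^2\otimes\Complex^2$ fails and an enlarged ancilla is needed. It does not fail, and this is precisely why the case is not hard. Since $S^{\perp}$ is one-dimensional and spanned by a traceless Hermitian operator, that operator is (after normalization) a Hermitian \emph{unitary} $V$ on $\Complex^2$; up to a unitary, $V=Z$. Taking the shared state $\ket{\Phi}=\sum_k\ket{k}_A\ket{k}_{B_0}$ and the two codewords $\ket{u}=\ket{\Phi}$, $\ket{v}=(V\otimes I)\ket{\Phi}$, one computes $\Tr_{B_0}\ket{u}\bra{v}\propto V^{\dagger}\in S^{\perp}$, so $\bra{v}(M\otimes I_{B_0})\ket{u}=\Tr\bigl(M\,\Tr_{B_0}\ket{u}\bra{v}\bigr)=0$ for every $M\in S$; this is exactly the DSW distinguishability condition, and the $B_0$-marginals of the two codewords agree. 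So the size-two entanglement-assisted code exists with a single e-bit, your "hardest step" closes immediately, and no block-diagonal structure of $S$ is required. With that correction the proposal is a complete and correct proof of the proposition, albeit one the paper never writes down.
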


In fact, as shown in the proof of Theorem~\ref{Thm:qubit}, 
$\dim(S) \ne 3$ for qubit channels.
From Theorem~\ref{Thm:qubit} and Proposition~\ref{Prop:M0^SE_qubit}, 
we directly obtain the following corollary.
\begin{Cor}\label{Cor:result}
For any qubit channel $\cN$, 
$\M_0^{\mathrm{SE}}(S) = \M_0^{\mathrm{QNS}}(\cN)$ and
$\C_0^{\mathrm{SE}}(\cN) = \log\M_0^{\mathrm{QNS}}(\cN)$.
\end{Cor}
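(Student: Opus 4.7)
The corollary is a direct synthesis of Theorem~\ref{Thm:qubit} and Proposition~\ref{Prop:M0^SE_qubit}, so my plan is essentially to carry out a case analysis on $\dim(S)$ and check that the values computed by the two results agree. The plan is to organize the proof as a short tabulation rather than any new computation.

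First, I would invoke Theorem~\ref{Thm:qubit} to write $\M_0^{\mathrm{QNS}}(\cN) = 4/\dim(S)$ for the qubit channel $\cN$ with noncommutative graph $S$. Next, I would recall the important side observation extracted from the proof of Theorem~\ref{Thm:qubit}: the case $\dim(S)=3$ cannot occur for a qubit channel. This reduces the relevant possibilities to $\dim(S) \in \{1,2,4\}$, which in turn give $\M_0^{\mathrm{QNS}}(\cN) = 4, 2, 1$ respectively. These are exactly the integers appearing in Proposition~\ref{Prop:M0^SE_qubit} for the same values of $\dim(S)$, so a line-by-line comparison yields $\M_0^{\mathrm{SE}}(S) = \M_0^{\mathrm{QNS}}(\cN)$ in each admissible case.

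For the asymptotic equality, I would simply substitute the just-established identity $\M_0^{\mathrm{SE}}(S) = \M_0^{\mathrm{QNS}}(\cN)$ into the relation $\C_0^{\mathrm{SE}}(S) = \log \M_0^{\mathrm{SE}}(S)$ provided by Proposition~\ref{Prop:M0^SE_qubit}. Since $\C_0^{\mathrm{SE}}$ depends only on the noncommutative graph $S$, we may identify $\C_0^{\mathrm{SE}}(\cN)$ with $\C_0^{\mathrm{SE}}(S)$, and conclude $\C_0^{\mathrm{SE}}(\cN) = \log \M_0^{\mathrm{QNS}}(\cN)$.

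There is no real obstacle here, because all nontrivial work has already been absorbed into Theorem~\ref{Thm:qubit} (which does the hard job of ruling out $\dim(S)=3$ and computing $\Upsilon$ for every qubit channel) and into Proposition~\ref{Prop:M0^SE_qubit} (which lists the entanglement-assisted values). The only subtlety is being explicit that $\dim(S) = 3$ never appears, since otherwise Proposition~\ref{Prop:M0^SE_qubit} would give $\M_0^{\mathrm{SE}}(S) = 2$ while Theorem~\ref{Thm:qubit} would give $\M_0^{\mathrm{QNS}}(\cN) = 4/3$, which is not even an integer; noting this incompatibility is what justifies pulling the exclusion of $\dim(S) = 3$ out of the earlier proof and stating it as the lemma underlying the corollary.
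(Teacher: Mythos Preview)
Your proposal is correct and matches the paper's approach exactly: the paper also notes (from the proof of Theorem~\ref{Thm:qubit}) that $\dim(S)\neq 3$ for qubit channels and then reads off the corollary directly from Theorem~\ref{Thm:qubit} and Proposition~\ref{Prop:M0^SE_qubit}. Your explicit tabulation over $\dim(S)\in\{1,2,4\}$ is just a slightly more detailed rendering of the same one-line deduction.
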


\begin{Rem}
Corollary~\ref{Cor:result} means that 
QNS correlations are not more powerful than shared entanglement 
for a single use of a qubit channel.
\end{Rem}

\section{Conclusion}
\label{sec:conclusion}
We have considered 
the one-shot QNS assisted zero-error classical capacities $\M_0^{\mathrm{QNS}}$ 
of qubit channels.
First, we have calculated the exact values of $\M_0^{\mathrm{QNS}}$
for Pauli channels and nonunital qubit channels, 
and then 
we have shown that 
$\M_0^{\mathrm{SE}}(S) = \M_0^{\mathrm{QNS}}(\cN)$ 
for any qubit channel.

Moreover, 
we can present examples of quantum channels $\cN$ such that
$\C_0^{\mathrm{SE}}(\cN) = 0 < \C_0^{\mathrm{QNS}}(\cN)$.
(Such examples for classical channels were already known~\cite{CLMW11}).
This implies that
QNS correlations can make useless channels useful
although the shared entanglement cannot do so.
Let us consider Pauli channels with three nonzero probability $p_j$.
Then Theorem~\ref{Prop:Pauli ch.} and Corollary~\ref{Cor:QNS_Pauli} shows that
$\log\M_0^{\mathrm{QNS}} = 0 < \log(4/3) = \C_0^{\mathrm{QNS}}$.
By Corollary~\ref{Cor:result}, 
we can obtain the inequality
$\C_0^{\mathrm{SE}} = 0 < \C_0^{\mathrm{QNS}}$.

Another example is the (nonunital) extremal qubit channel 
$\cN(\cdot) = \sum_{j=1}^{2}E_j \cdot E_j^{\dagger}$,
where $E_1 = \cos\theta\ket{0}\bra{0} + \cos\varphi\ket{1}\bra{1}$, 
$E_2 = \sin\varphi\ket{0}\bra{1} + \sin\theta\ket{1}\bra{0}$, 
and $\theta, \varphi \in \Real$ with $\cos^2\theta \ne \cos^2\varphi$~\cite{{RSW02}}.
By the criterion for determining whether 
$\C_0^{\mathrm{QNS}}=0$ in Ref.~\cite{{DW14}}, 
we can see that
$\C_0^{\mathrm{QNS}}>0$, while $\C_0^{\mathrm{SE}} = 0$ 
by Theorem~\ref{Prop:Upsilon_nonunital} and Corollary~\ref{Cor:result}.
 
There are many questions about $\C_0^{\mathrm{QNS}}$.
One of them is determining the value of $\C_0^{\mathrm{QNS}}$ for quantum channels.
In fact, 
the explicit value of $\C_0^{\mathrm{QNS}}$ is yet unknown, 
even for the above nonunital extremal qubit channels.
Another is the additivity of $\C_0^{\mathrm{QNS}}$.
For any unital qubit channel, 
$\C_0^{\mathrm{QNS}}$ is additive by Corollary~\ref{Cor:QNS_Pauli}.
Moreover, 
$\C_0^{\mathrm{QNS}}$ is additive for any classical-quantum channel~\cite{DW14}.
However, 
it is not known that the additivity holds for any quantum channel.
Although the above questions are proved for unital qubit channels in this paper, 
our work could shed light on the mentioned problem.

\acknowledgments
We thank Runyao Duan for helpful comments.
This research was supported by the Basic Science Research Program through the National Research Foundation of Korea funded by the Ministry of Education (NRF-2012R1A1A2003441).

\bibliography{QNS}

\end{document}